\documentclass[letterpaper, 10 pt, conference]{ieeeconf}  

\IEEEoverridecommandlockouts                              
\usepackage{amsmath, amssymb, amsthm}
\usepackage{multirow}
\usepackage{booktabs}
\newtheorem{theorem}{Theorem}
\newtheorem{assumption}{Assumption}

\newtheorem{proposition}{Proposition}
\newtheorem{definition}{Definition}
\usepackage{graphicx}
\usepackage{caption}
\usepackage{url}

\DeclareMathOperator*{\minimize}{minimize}

\title{\LARGE \bf
An Adaptive Multi-Parameter ADMM Algorithm for Embedded MPC
}

\author{Alberto Zaupa\textsuperscript{\ddag} and Mikael Johansson\textsuperscript{\ddag}
\thanks{\textsuperscript{\ddag} The authors are with 
the Department of Decision and Control Systems, KTH Royal Institute of Technology, Stockholm, Sweden. 
        E-Mail: {\tt\small $\{$zaupa, mikaelj$\}$@kth.se}
        }
}

\begin{document}

\maketitle
\thispagestyle{empty}
\pagestyle{empty}

\begin{abstract}
We introduce an adaptive multi-parameter variant of ADMM and prove that it exhibits local superlinear convergence once the set of active constraints has been identified. In simulations, the proposed algorithm consistently outperforms OSQP, a standard ADMM solver, in terms of iteration count. We then implement an MPC solver based on our method and compare its runtime against a broader selection of state-of-the-art algorithms. Evaluations on challenging benchmark problems reveal that our approach delivers competitive performance both in terms of average and worst-case solve times, without being limited to coarse tolerances, as is typically the case for standard ADMM implementations and first-order methods.
\end{abstract}

\section{Introduction}

The Alternating Direction Method of Multipliers (ADMM) has become a popular optimization technique across a wide range of engineering fields from statistical learning and signal processing to control theory~\cite{boyd_admm,sra,Stellato2020}. One reason for its popularity is that it can exploit problem structure, e.g. handling non-smooth objectives or decomposing large problems into parallelizable subproblems. 
Another reason for its popularity is that it has global convergence guarantees for all positive values of its penalty parameter. In practice, ADMM is typically quick to find solutions of moderate accuracy, while convergence to high accuracy is much slower. It can also make effective use of warm-starts. These features make ADMM particularly attractive for Model Predictive Control (MPC), where optimization problems at successive sampling instances are very similar and previous solutions can be used to craft good initial guesses. Furthermore, since the optimizer then operates inside a feedback loop, solutions of moderate accuracy are often acceptable. In embedded settings, however, MPC is a latency-constrained application for which solvers must deliver sufficiently accurate solutions within a strict execution time budget. Improving the reliability and convergence speed of ADMM is therefore essential for extending its use to more demanding MPC scenarios. 

In this paper, we propose an adaptive multi-parameter ADMM scheme and prove its local superlinear convergence. 
The analysis here is restricted to box-constrained problems, but numerical evidence suggests that a similar result holds also in the general setting.
To the best of our knowledge, this is the first ADMM method for which superlinear convergence has been established. 
Existing results for ADMM-type methods describe sublinear active-set identification followed by linear local convergence; the superlinear rate established here represents a strict improvement, implying a much more rapid local reduction of residuals and optimality errors and making high-accuracy solutions attainable in substantially fewer iterations than with classical ADMM.

We further implement the proposed scheme in a high-performance solver and show that it matches or outperforms state-of-the-art solvers such as OSQP, qpOASES, HPIPM, and QPALM \cite{Stellato2020,Ferreau2014,hpipm,qpalm}. 
The solver is competitive in both latency and throughput, as measured by worst-case and average solve times, respectively. This is a relatively uncommon combination for MPC solvers, which typically favor one of these performance metrics at the expense of the other, with notable exceptions such as QPALM.

\begin{figure}[t] 
    \centering 
    \includegraphics[width=0.875\linewidth]{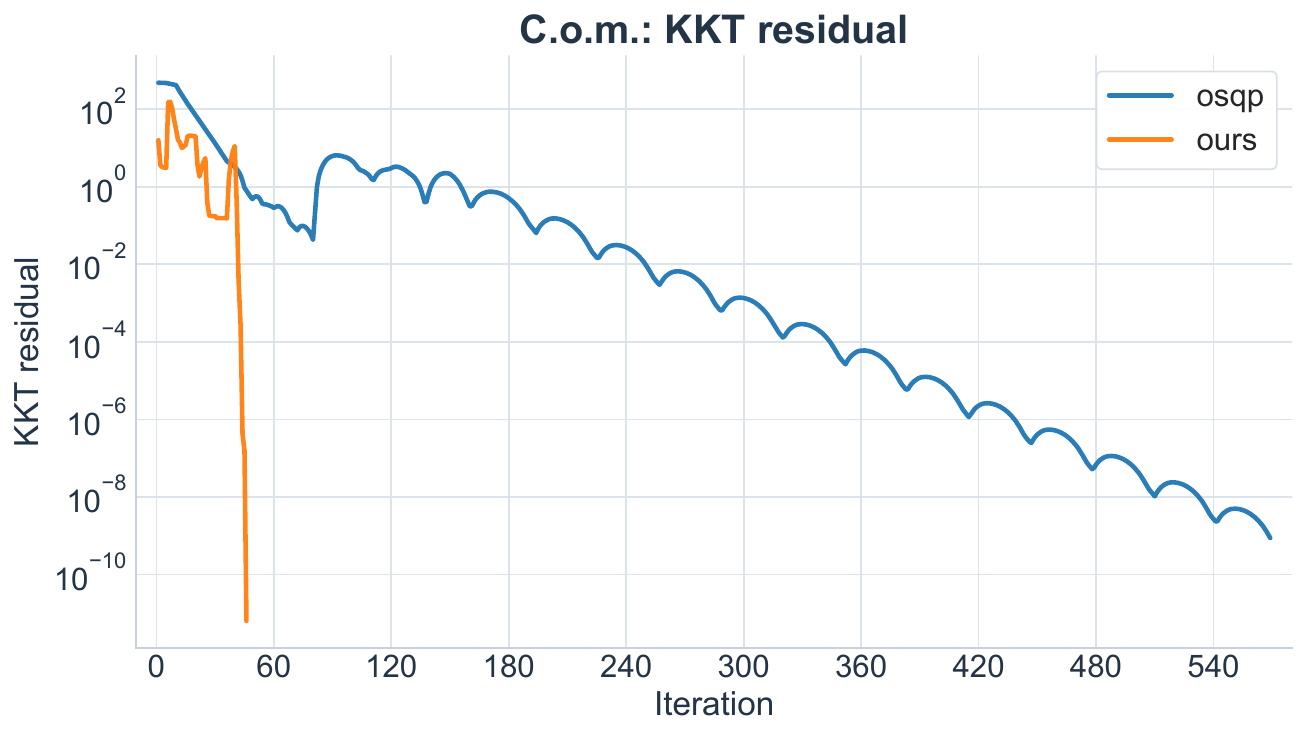} 
    \caption{Evolution of the KKT residual norm for OSQP and the proposed ADMM algorithm on one of the problems listed in Table \ref{tab:osqp_comp}. After our algorithm identifies the active set, its convergence speed increases dramatically.}
\label{fig:osqp_comp}
\end{figure}

This paper is organized as follows. Section~\ref{sec:background} reviews the necessary background on MPC and ADMM. Section~\ref{sec:amp-admm} then introduces our adaptive multi-parameter penalty scheme and proves the local superlinear convergence rate. In Section~\ref{sec:implementation}, we compare the proposed MPC solver with state of the art solutions on several benchmarks. Finally, in Section~\ref{sec:conclusions} we conclude the paper and identify directions for future work.

\section{Background} \label{sec:background}

\subsection{Model Predictive Control}
In linear MPC~\cite{borrelli2017predictive}, at each sampling time we solve an optimization problem of the form:
\begin{equation}
\label{eq:mpc}
\begin{aligned}
    \minimize_{\{x_t\}\{u_t\}} &\quad \sum_{t=0}^{N-1} \frac{1}{2} x_t^{\top} Q x_t + q_t^{\top} x_t + \frac{1}{2} u_t^{\top}R u_t + r_t^{\top} u_t\\
    &\quad + \frac{1}{2} x_N^{\top} Q_f x_N + q_f^{\top}x_N \\
    \text{subject to} &\quad x_{t+1} = A x_t + B u_t \quad t=0,1,\dots,N-1 \\
    &\quad c_x \leq C_x x_t \leq d_x \quad t=1,2,\dots, N\\ 
    &\quad c_u \leq C_u u_t \leq d_u \quad t=0,1,\dots,N-1 \\
    &\quad x_0 \text{ given.}
\end{aligned}
\end{equation}
Here $x_0$ is the state of the plant, $A$ and $B$ are the system matrices, $N \in \mathbb{N}$ is the horizon length, $Q=Q^{\top} \succeq 0$ and $R=R^{\top}\succ0$ are the stage cost matrices and $Q_f = Q_f^{\top}\succ0$ is the terminal cost matrix. The terms $q_f$, $\{q_t\}$ and $\{r_t\}$ arise when tracking a non-zero reference trajectory, and the stage-wise inequality constraints enforce operational and physical limits of the plant and the actuators.

In this paper, we consider two standard solution methods for~\eqref{eq:mpc}.
The first approach condenses the problem by eliminating the state variables to arrive at a QP in the form
\begin{equation}
\label{eq:dense_qp}
\begin{aligned}
    \minimize_w &\quad \frac{1}{2}w^{\top} H w + g^{\top} w \\
    \text{subject to} &\quad a \leq Gw \leq b
\end{aligned}
\end{equation}
where the Hessian $H = H^{\top} \succ 0$ is dense. The decision vector $w$ is formed by stacking 
the controls at all stages of the prediction horizon into a single vector. In the MPC setting, condensed QPs are often solved using qpOASES~\cite{Ferreau2014}.

The second approach retains the states as optimization variables, as is done in solvers like HPIPM~\cite{hpipm}. The resulting QPs preserve the sparse and structured optimal control problem (OCP) formulation. This structure is important because it can be exploited by tailored linear-algebra routines, in particular Riccati-based methods, leading to substantially more efficient QP solves than would be obtained by treating the problem as an unstructured sparse program~\cite{riccati_frison}. The reformulated problem is
\begin{equation}
\label{eq:sparse_qp}
\begin{aligned}
    \minimize_{w} &\quad \frac{1}{2} w^{\top} H w + g^{\top} w + \mathcal{I}_{\mathcal{X}}(w) \\
    \text{subject to} &\quad a \leq G w \leq b
\end{aligned}
\end{equation}
where now $w$ is obtained by stacking both $\{u_t\}$ and $\{x_t\}$ over the horizon, and therefore $H$ and $G$ are block diagonal. The set $\mathcal{X}$ is the affine space where the dynamical constraints are satisfied, and $\mathcal{I}_{\mathcal{X}}$ is the indicator function of $\mathcal{X}$.

\subsection{ADMM}
ADMM~\cite{boyd_admm} is a standard algorithm for solving convex optimization problems of the form:
\begin{equation}
\label{eq:admm_problem}
\begin{aligned}
    \minimize_{w,z} &\quad f(w) + h(z) \\
    \text{subject to} &\quad M w + N z = c
\end{aligned}
\end{equation}
Note that by the appropriate choice of $f$, we can rewrite both~\eqref{eq:dense_qp} and~\eqref{eq:sparse_qp} in the form:
\begin{equation}
\label{eq:admm_mpc}
\begin{aligned}
    \minimize_{w,z} &\quad f(w) + \mathcal{I}_{[a,b]}(z) \\
    \text{subject to} &\quad Gw = z
\end{aligned}
\end{equation}
which is a particular instance of~\eqref{eq:admm_problem}. Solving Problem~\eqref{eq:admm_mpc} through ADMM involves iterating the following updates:
\begin{equation}
\label{eq:admm_iters}
\begin{aligned}
w^{(k+1)} &= \arg \min_w f(w) + \frac{1}{2} \| Gw - z^{(k)} + \rho^{-1} y^{(k)}\|_{\rho}^2 \\
z^{(k+1)} &= \Pi_{[a,b]} \left(Gw^{(k+1)} + \rho^{-1} y^{(k)} \right) \\
y^{(k+1)} &= y^{(k)} + \rho\, (G w^{(k+1)} - z^{(k+1)})
\end{aligned}
\end{equation}
where $\rho$ is a positive definite \emph{diagonal matrix}. This is more general than the standard scalar-penalty form, and allows different penalty parameters for different constraint components. 
Here superscripts between parenthesis like $w^{(k)}$ denote iteration count. The primal and dual KKT residuals for ADMM take the following form~\cite{boyd_admm}:
\begin{equation}
\label{eq:residuals}
\begin{aligned}
r_p^{(k)} &= G w^{(k)} - z^{(k)} \\
r_d^{(k+1)} &= G^{\top} \rho\, (z^{(k+1)} - z^{(k)})
\end{aligned}
\end{equation}
and we run the algorithm until $\|r_p \|_{\infty} < \varepsilon$ and $\|r_d \|_{\infty} < \varepsilon$. Under standard assumptions on $f$, the iterates generated by \eqref{eq:admm_iters} converge to a solution of~\eqref{eq:admm_mpc} for any value of $\rho$.

Common choices for the penalty matrix are $\rho = \varrho I$  where $\varrho \in \mathbb{R}^+$ (the standard single $\rho$), or $\rho = \varrho\, P^2$, where $P$ is a diagonal matrix that acts as a constraint preconditioner~\cite{Stellato2020}.

\subsection{Adaptive ADMM}
For fixed-parameter ADMM, the convergence speed can be very sensitive to the choice of $\rho$, and for some specific classes of QPs one can compute the optimal combination of preconditioner $P$ and scalar penalty $\varrho$ that leads to the fastest linear convergence rate~\cite{optimal_rho}. However, in practice this is often either too expensive or just not possible. For this reason, ADMM solvers typically rely on adaptive schemes that adjust the value of a scalar penalty $\varrho$ during the execution of the algorithm. The simplest adaptive scheme is the so called residual balancing heuristic~\cite{boyd_admm}:
\begin{equation}
\label{eq:rb}
\begin{aligned}
    \varrho^{(k+1)} = \begin{cases}
        \delta \varrho^{(k)} \quad&\text{if}\quad \|r_p^{(k+1)}\| > \mu \|r_d^{(k+1)}\|\\ 
        \frac{1}{\delta} \varrho^{(k)} \quad&\text{if}\quad \|r_d^{(k+1)}\| > \mu \|r_p^{(k+1)}\|\\ 
        \varrho^{(k)} \quad &\text{otherwise}
    \end{cases}
\end{aligned}
\end{equation}
with $\delta,\mu > 1$. An alternative variant is:
\begin{equation}
\label{eq:rb_sqrt}
    \varrho^{(k+1)} = \varrho^{(k)} \sqrt{\frac{\|r_p^{(k+1)}\|}{\|r_d^{(k+1)}\|}}
\end{equation} 

Note that in the varying-$\rho$ case, the expression for the dual residual is $r_d^{(k+1)} = G^{\top}\rho^{(k)}(z^{(k+1)} - z^{(k)})$.
There also exist more complex versions of residual balancing, that, for example, are designed to be independent of problem scaling~\cite{residual_balancing, Stellato2020}. 
Recently,~\cite{multiblock_admm} proposed a penalty adaptation scheme for problems where the constraint matrix $G$ is block diagonal. The approach associates a different penalty parameter to each block of $G$ and updates them independently using the following SRA heuristic~\cite{sra}:
\begin{equation}
\label{eq:sra}
    \rho_{\mathcal{I}_i}^{(k+1)} = \frac{\|\left( y^{(k+1)} - y^{(k)} \right)_{\mathcal{I}_i} \|}{\|\left( z^{(k+1)} - z^{(k)} \right)_{\mathcal{I}_i} \|}
\end{equation}
where $\mathcal{I}_i$ spans the set of block indices of $G$.

This \textit{multi-block} scheme is well suited for MPC  when we solve problems through the sparse formulation~\eqref{eq:sparse_qp}. Indeed, in our own experiments we find that the scheme often improves convergence speed over single-parameter ADMM, especially for box-constrained problems. However, improvements are less significant on non-box-constrained problems. Moreover, this method is in general not applicable when we solve the condensed problem~\eqref{eq:dense_qp}.

\section{Adaptive multi-parameter ADMM} \label{sec:amp-admm}

In this section we introduce the proposed multi-parameter update for the penalty matrix $\rho$. For box-constrained problems,
our method can be interpreted as a per-constraint version of the square-root residual-balancing rule~\eqref{eq:rb_sqrt}:
\begin{equation}
\label{eq:multiparameter_rb}
\begin{aligned}
\rho_i^{(k+1)} = \rho_i^{(k)} \sqrt{\frac{|r_p^{(k+1)}|_i}{|r_d^{(k+1)}|_i}} .
\end{aligned}
\end{equation}
In our numerical experiments, this simple rule proved remarkably effective: it often accelerated convergence substantially while remaining very stable. Compared with the ``dead-band'' version of residual balancing in~\eqref{eq:rb}, the square-root variant generally led to faster convergence while also introducing fewer tuning parameters.

Note that \eqref{eq:multiparameter_rb} can be rewritten as
\begin{align*}
    \rho_i^{(k+1)} = \sqrt{\frac{|y^{(k+1)}-y^{(k)}|_i}{|z^{(k+1)}-z^{(k)}|_i}},
\end{align*}
so that its right-hand side is precisely the square root of the right-hand side of the SRA heuristic~\eqref{eq:sra}. This relation is worth emphasizing. Although in~\cite{multiblock_admm} Lozenski \emph{et al.} discuss SRA primarily in the multi-block setting, the quantities appearing in the update are naturally $m$-dimensional, and the heuristic can therefore also be interpreted on a per-constraint basis. Our update may thus be viewed as a more conservative variant of SRA, obtained by taking the square root of its update factor. 
In our numerical experiments we found SRA to be generally more aggressive than~\eqref{eq:multiparameter_rb}, sometimes leading to slightly faster convergence, while often diverging.

For general $G$, the primal and dual residuals have different dimensions, so we cannot simply apply (\ref{eq:multiparameter_rb}). Nevertheless, $ \rho^{(k)}(z^{(k+1)}-z^{(k)})$ is still a per-constraint pre-cursor of the dual residual and it measures how much each individual constraint copy $z_i$ is moving. Hence, we propose to base our update on
\begin{equation}
\label{eq:update_rule_naive}
\rho_i^{(k+1)} = \rho_i^{(k)} \sqrt{\frac{|G w^{(k+1)} - z^{(k+1)}|_i}{|\rho^{(k)}\left(z^{(k+1)} - z^{(k)}\right)|_i}}.
\end{equation}

For numerical stability and reliability, we add a number of safeguards to this update. The first one ensures that the entries of $\rho^{(k)}$ remain strictly positive and finite by replacing the basic update by 
\begin{equation}
\begin{aligned}
\phi_i^{(k+1)} &= \frac{\max\{ \eta, \; |Gw^{(k+1)} - z^{(k+1)}|_i \}}{\max\{ \eta, \; |\rho^{(k)} (z^{(k+1)} - z^{(k)})|_i \}}    \\
\hat{\rho}_i^{(k+1)} &= \rho^{(k)}_i \sqrt{\phi_i^{(k+1)}} 
\end{aligned}
\end{equation}
for some small parameter $\eta>0$. The next one modifies $\hat{\rho}$ into a corrected intermediate quantity $\tilde{\rho}$
that reflects the active/inactive structure of constraints. Specifically, we define
\begin{equation}
\label{eq:rho_tilde}
\begin{aligned}
    \tilde{\rho}_i^{(k+1)} = \begin{cases}
        \min\{\hat{\rho}_i^{(k+1)}, \; \mu_1 \rho_i^{(k)}\} \; &\text{if}\; |y^{(k+1)}|_i \leq \epsilon \\
        \max\{\hat{\rho}_i^{(k+1)}, \; \mu_2 \rho_i^{(k)}\} \; &\text{otherwise}
    \end{cases}
\end{aligned}
\end{equation}
with $\mu_1 \in (0,1)$ and $\mu_2 > 1$. The rationale for this modification is that if the dual variable is essentially zero, the corresponding coordinate behaves as an inactive constraint and we enforce a mild multiplicative decrease. Otherwise, if the dual variable is non-zero the corresponding constraint is currently active, and therefore we enforce a mild multiplicative increase of its penalty. 
Thus, $\tilde{\rho}$ can be viewed as an active-set-aware correction of $\hat{\rho}$ rather than a separate adaptation principle. Finally, we still do not apply $\tilde{\rho}$. Large instantaneous changes in $\rho$ can deteriorate the conditioning of the $w$-update and destabilize the method. For this reason, the final penalty update is obtained by first clipping $\tilde{\rho}$ to a prescribed interval $[\underline{\rho}, \overline{\rho}]$, and then moving only part of the way toward the clipped value in the log-domain. Concretely, we set
\begin{equation}
\label{eq:update_rule}
    \rho_i^{(k+1)} =\left( \rho^{(k)}_i \right)^{1-\alpha} \Pi_{[\underline{\rho}, \overline{\rho}]} \left(\tilde{\rho}_i^{(k+1)}\right)^{\alpha} 
\end{equation}
where $\alpha\in (0,1]$ and $\Pi_{[\underline{\rho}, \overline{\rho}]}$ denotes projection onto the interval 
$[\underline{\rho}, \overline{\rho}]$. This is a log-domain under-relaxation step: when $\alpha=1$, we accept the clipped candidate in full, while for smaller values of $\alpha$, we interpolate geometrically between the previous penalty and the new candidate. The clipping guarantees bounded $\log\rho^{(k)}$ while the under-relaxation reduces oscillations and improves stability.

In the remainder of this paper we will fix $\alpha = \frac{1}{2}$, $\underline{\rho} = 10^{-6}$, $\overline{\rho} = 10^6$, $\mu_1 = 0.99$, $\mu_2 = 1.01$ and $\eta = \epsilon = 10^{-15}$. Furthermore, we apply the update rule in~\eqref{eq:update_rule} every $K = 5$ iterations.

\subsection{Comparison with OSQP}

To highlight the main features of the proposed update rule, we compare its convergence speed with OSQP. This is a natural baseline, since OSQP employs an adaptive single-parameter $
\varrho$ update closely related to residual balancing. Table~\ref{tab:osqp_comp} reports the number of iterations to convergence on five benchmark MPC problems drawn from the literature: the Quadrotor problem, a standard quadcopter control problem with linearized dynamics; the Atlas and Quadruped benchmarks from~\cite{reluqp}; and the Chain of masses (C.o.m.) and Aircraft problems from~\cite{mpc_benchmarks}. For each benchmark we solve a single problem instance, without warm start. OSQP is run with default settings, except that the frequency of $\varrho$ updates is matched to our method, i.e., $K=5$. Out of the candidates $\{5, 10, 15, 20, 25, 50\}$, we found the update frequency given by $K=5$ to lead to the fastest convergence for OSQP.

Across all benchmarks, the proposed algorithm converges substantially faster than OSQP and, remarkably, its performance is nearly insensitive to the target tolerance $\varepsilon$. This behavior is unusual for ADMM and related first-order methods, which typically slow down considerably at tight tolerances. Figure~\ref{fig:superlinear_convergence} and Table~\ref{tab:active_set_id} suggest that this behavior is tied to active-set identification: after some iteration $\bar K$, the signs of the dual variables cease to change and the active set at the solution is identified, after which convergence accelerates sharply. Active-set identification is well documented for standard ADMM with fixed $\rho$~\cite{as_identification}, where one can obtain a sublinear transient followed by fast linear convergence once the active set is identified. Our observations are closely related, but differ in two important respects: existing results do not cover adaptive $\rho$ updates, and the convergence regime we observe appears faster than linear, with each $\rho$ update further accelerating convergence. This motivates the theoretical analysis developed in the next section.

\begin{table}
\centering
\begin{tabular}{llccc}
\toprule
Problem & Solver & $\varepsilon=10^{-3}$ & $\varepsilon=10^{-6}$ & $\varepsilon=10^{-9}$ \\
\midrule
\multirow{2}{*}{Atlas} 
    & Ours & \textbf{18} & \textbf{21} & \textbf{22} \\
    & OSQP & 61 & 103 & 132 \\
\midrule
\multirow{2}{*}{Quadruped} 
    & Ours & \textbf{32} & \textbf{36} & \textbf{40} \\
    & OSQP & 156 & 226 & 387 \\
\midrule
\multirow{2}{*}{Quadrotor} 
    & Ours & \textbf{42} & \textbf{43} & \textbf{45} \\
    & OSQP & 227 & 443 & 659 \\
\midrule
\multirow{2}{*}{C.o.m.} 
    & Ours & \textbf{43} & \textbf{44} & \textbf{46} \\
    & OSQP & 285 & 439 & 569 \\
\midrule
\multirow{2}{*}{Aircraft} 
    & Ours & \textbf{42} & \textbf{43} & \textbf{44} \\
    & OSQP & 2965 & - & - \\
\midrule
\bottomrule
\end{tabular}
\caption{We compare iterations until convergence on 5 different problems for our solver and OSQP, and over different tolerances for the termination criteria. On the Aircraft problem, OSQP cannot meet the termination criteria for $\varepsilon \in \{10^{-6}, 10^{-9}\}$.}
\label{tab:osqp_comp}
\end{table}

\begin{figure}[t] 
    \centering 
    \includegraphics[width=\linewidth]{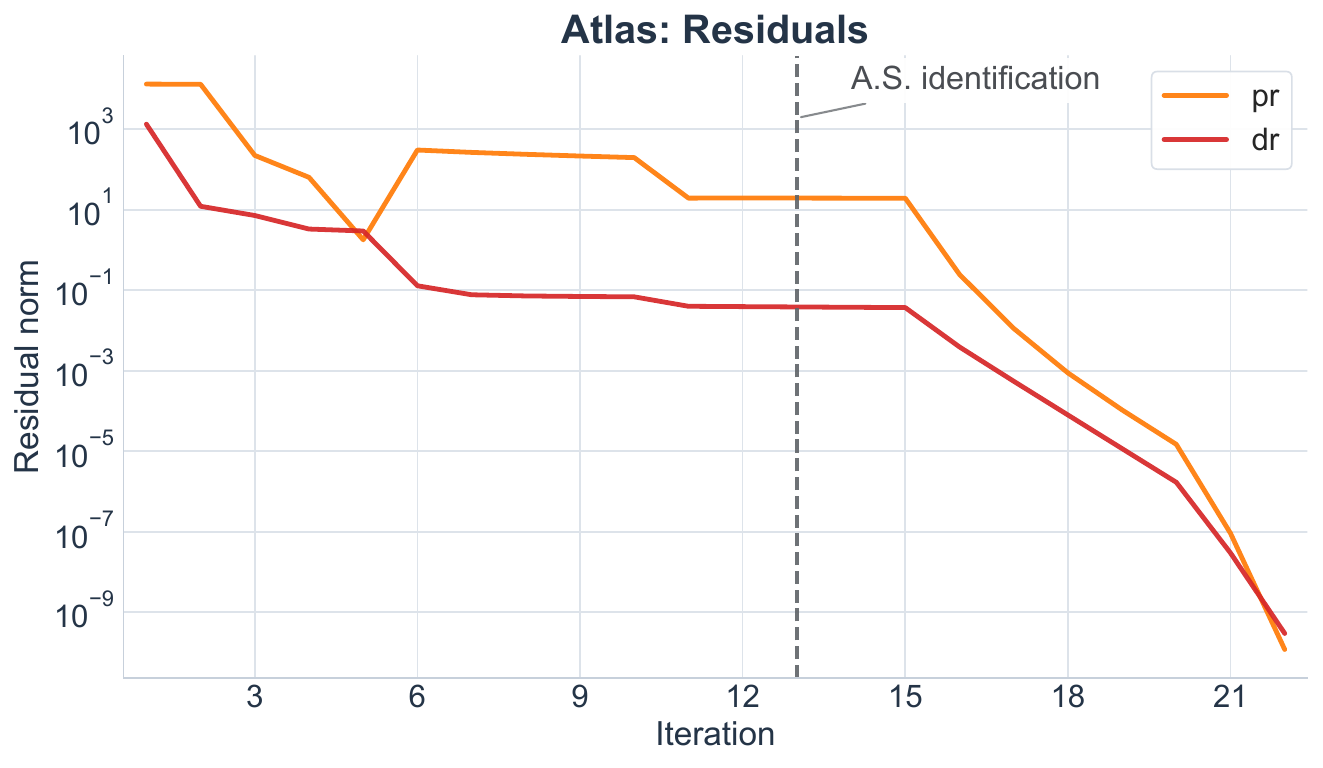} 
    \caption{KKT residuals for the proposed ADMM algorithm on the Atlas problem.}
\label{fig:superlinear_convergence}
\end{figure}

\begin{table}
\centering
\begin{tabular}{lccc}
\toprule
Problem & $\bar{K}$  & Convergence to $\varepsilon=10^{-9}$ & $\|r^{(\bar{K})}\|_{\infty}$ \\
\midrule
Atlas & 13 & 22 & 19.55 \\
\midrule
Quadruped & 28 & 40 & 0.21 \\
\midrule
Quadrotor & 35 & 45 & 1.82 \\
\midrule
C.o.m & 42 & 46 & $3.6\cdot10^{-3}$ \\
\midrule
Aircraft & 38 & 44 & 2.06\\
\bottomrule
\end{tabular}
\caption{After the Active Set is identified at iteration $\bar{K}$, only a couple of $\rho$ updates (happening every 5 iterations) are necessary for the $\infty$-norm of KKT residuals to fall below $\varepsilon=10^{-9}$. Note that before active set identification, we are several orders of magnitude away from convergence.}
\label{tab:active_set_id}
\end{table}

\subsection{Theoretical Analysis}
\label{sec:theoretical_analysis}

Due to space limitations, we do not present a complete analysis of the algorithm here. Instead, we focus on its most novel convergence behavior, namely local superlinear convergence, and we carry out the analysis for box-constrained problems with $G=I$. Moreover, for simplicity of exposition we assume that we update $\rho^{(k)}$ in every iteration. 

Our goal in this subsection is not to prove active-set identification, but to characterize the asymptotic regime after identification has taken place. Accordingly, we assume that after some finite iteration $\bar{K}$, the active set of the ADMM iterates coincides with the active set at the solution. 
In the following we formalize this notion in terms of the behavior of $y^{(k)}$ and $z^{(k)}$. We begin by defining the concept of \emph{active} and \emph{inactive} constraints at a solution.

\begin{definition}
Given a primal-dual solution of Problem~\eqref{eq:admm_mpc} $(w^{\star}, z^{\star}, y^{\star})$, we say that constraint $i$ is active at the triple if $y_i^{\star} \neq 0$. Otherwise we say that $i$ is inactive.
\end{definition}
\begin{assumption}[Post-identification regime]
\label{assumption:1}
Let $(w^{\star}, z^{\star}, y^{\star})$ be a primal-dual solution of Problem~\eqref{eq:admm_mpc}, and let $\mathbb{A}$ and $\mathbb{I}$ denote the active and inactive coordinates at the solution. We assume that there exists $\bar{K} \in \mathbb{N}$ such that for all $k \geq \bar{K}$, the ADMM iterates generated by~\eqref{eq:admm_iters} satisfy
\begin{equation}
\label{eq:assumtion_1}
\text{sign}(y_i^{(k)}) = \text{sign}(y_i^{\star}) \quad \forall i =1,2,\dots, m
\end{equation}
where we define $\text{sign}(0) = 0$.
\end{assumption}

As the following proposition shows, Assumption~\ref{assumption:1} implies that in the post-identification regime, the active set of the iterates is fixed and agrees with the active set at the solution.

\begin{proposition}
\label{prop:assumption_1}
Let Assumption~\ref{assumption:1} hold. Then for all $k > \bar{K}$:
\begin{equation}
\begin{aligned}
    z_{\mathbb{A}}^{(k)} &= z_{\mathbb{A}}^{\star} \\
    z_{\mathbb{I}}^{(k)} &= w_{\mathbb{I}}^{(k)}
\end{aligned}
\end{equation}
\end{proposition}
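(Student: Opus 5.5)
We work in the setting of the theoretical analysis: $G=I$, box constraints $[a,b]$, and iterations~\eqref{eq:admm_iters} with a diagonal positive $\rho^{(k)}$ that may vary with $k$. The argument rests on the optimality conditions of the projection in the $z$-update. Write $v^{(k+1)} = w^{(k+1)} + (\rho^{(k)})^{-1} y^{(k)}$, so that $z^{(k+1)} = \Pi_{[a,b]}(v^{(k+1)})$. The $y$-update then gives
\begin{equation*}
y^{(k+1)} = \rho^{(k)}\bigl(v^{(k+1)} - z^{(k+1)}\bigr)
= \rho^{(k)}\bigl(v^{(k+1)} - \Pi_{[a,b]}(v^{(k+1)})\bigr).
\end{equation*}
This identity is the standard fact that $y^{(k+1)}$ lies in the normal cone of $[a,b]$ at $z^{(k+1)}$. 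Because the box is separable, it holds coordinatewise, and $\rho^{(k)}_i>0$ can be divided out freely. For every $k\ge 1$ and every $i$ we therefore get three cases:
\begin{itemize}
\item $y_i^{(k)}>0$ exactly when $v_i^{(k)}>b_i$, in which case $z_i^{(k)}=b_i$;
\item $y_i^{(k)}<0$ exactly when $v_i^{(k)}<a_i$, in which case $z_i^{(k)}=a_i$;
\item $y_i^{(k)}=0$ exactly when $v_i^{(k)}\in[a_i,b_i]$, in which case $z_i^{(k)}=v_i^{(k)}$.
\end{itemize}

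Next we apply the same reasoning at the solution. The KKT conditions of~\eqref{eq:admm_mpc} give $y^\star \in N_{[a,b]}(z^\star)$. Coordinatewise, $y_i^\star>0$ forces $z_i^\star=b_i$, and $y_i^\star<0$ forces $z_i^\star=a_i$. This uses the sign convention fixed by the $y$-update, which is the same convention as the normal cone. Primal feasibility also gives $w^\star=z^\star$.

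The active coordinates now follow directly. Fix $k>\bar K$ and $i\in\mathbb{A}$. Assumption~\ref{assumption:1} gives $\mathrm{sign}(y_i^{(k)})=\mathrm{sign}(y_i^\star)\neq 0$. If this sign is positive, the first case gives $z_i^{(k)}=b_i$, and the KKT step gives $z_i^\star=b_i$. If it is negative, both equal $a_i$. In either case $z_{\mathbb{A}}^{(k)}=z_{\mathbb{A}}^\star$.

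For the inactive coordinates, fix $i\in\mathbb{I}$. Assumption~\ref{assumption:1} gives $y_i^{(k)}=0$ for all $k\ge\bar K$. The third case yields $z_i^{(k)}=v_i^{(k)}=w_i^{(k)}+(\rho_i^{(k-1)})^{-1}y_i^{(k-1)}$. This is where the strict inequality $k>\bar K$ is needed. It guarantees $k-1\ge\bar K$, so $y_i^{(k-1)}=0$ as well, and hence $z_i^{(k)}=w_i^{(k)}$. The same conclusion also follows from the $y$-update itself: $0 = y_i^{(k)}-y_i^{(k-1)} = \rho_i^{(k-1)}(w_i^{(k)}-z_i^{(k)})$, which forces $w_i^{(k)}=z_i^{(k)}$.

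The main obstacle is bookkeeping rather than substance. First, the time index of $\rho$ matters: $\rho^{(k-1)}$ enters the $k$-th update, so the inactive case must use the assumption at both $k$ and $k-1$. Second, the sign convention linking $y$ to the normal cone must be checked once against~\eqref{eq:admm_iters}. The variation of $\rho^{(k)}$ is harmless, because only its positivity is ever used.
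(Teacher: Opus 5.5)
Your proof is correct and follows essentially the same route as the paper. In the active case you read off $z_i^{(k)}=b_i$ (or $a_i$) from the sign of $y_i^{(k)}=\rho_i^{(k-1)}\bigl(v_i^{(k)}-\Pi_{[a,b]}(v_i^{(k)})\bigr)$ and match it with the sign condition $y^\star\in N_{[a,b]}(z^\star)$ at the solution, and in the inactive case you use the dual update with $y_i^{(k)}=y_i^{(k-1)}=0$. Your index bookkeeping is slightly cleaner than the paper's: the paper writes $w_i^{(k)}=z_i^{(k)}$ for $k\ge\bar{K}$, but its argument only gives this for $k>\bar{K}$.
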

\begin{proof}
Let $i \in \mathbb{A}$, meaning that $y_i^{\star} \neq  0$. Suppose in particular that $y_i^{\star} > 0$. According to Assumption~\ref{assumption:1}, $\forall k \ge \bar{K}$ $y_i^{(k+1)} >0$. By definition of the dual update, this in turn is equivalent to:
\begin{equation}
\label{eq:z_active_is_z_star}
\begin{aligned}
    y_i^{(k)} + \rho_i^{(k)} \left(w_i^{(k+1)} - z_i^{(k+1)} \right) &> 0 \iff \\
    \frac{y_i^{(k)}}{\rho_i^{(k)}} + w_i^{(k+1)} &> z_i^{(k+1)} \,.
\end{aligned}
\end{equation}
Given the definition of the $z$ update in~\eqref{eq:admm_iters}, Equation~\eqref{eq:z_active_is_z_star} implies $z_i^{(k)} = b_i$, $\forall k > \bar{K}$. However, because $y_i^{\star} > 0$, we must have $z_i^{\star} = b_i$, and thus we showed that $\forall k > \bar{K}$ $z_i^{(k)} = z_i^{\star}$. Similar reasoning applies to the case $y_i^{\star} < 0$.

Now let $i \in \mathbb{I}$, meaning that $y_i^{\star} = 0$. Then by Assumption~\ref{assumption:1}, $\forall k \geq \bar{K}$ $y_i^{(k)} = 0$. This in turn implies:
\begin{equation}
\label{eq:z_inactive_is_w}
\begin{aligned}
0 &= y_i^{(k+1)} \\
&= y_i^{(k)} + \rho_i^{(k)} \left(w_i^{(k+1)} - z_i^{(k+1)}\right) \\
&= \rho_i^{(k)} \left(w_i^{(k+1)} - z_i^{(k+1)}\right)
\end{aligned}
\end{equation}
and therefore we get $w_i^{(k)} = z_i^{(k)}$ $\forall k \geq \bar{K}$.
\end{proof}

The post-identification regime is also characterized by a specific trend in the $\rho^{(k)}$-dynamics induced by our update rule: penalty parameters associated to active constraints rapidly increase, while the remaining ones decay to zero.
\begin{proposition}
\label{prop:rho_dynamics}
    The proposed update rule~\eqref{eq:update_rule} with $\underline{\rho} = 0$, $\overline{\rho} = +\infty$ and $\epsilon=0$ is such that when Assumption~\ref{assumption:1} holds, $\rho$ updates satisfy:
    \begin{equation}
    \label{eq:assumption_2}
    \begin{aligned}
        \rho_i^{(k+1)} &\ge \beta\, \rho_i^{(k)} \quad \forall i \in \mathbb{A} \\
        \rho_i^{(k+1)} &\le \gamma\, \rho_i^{(k)} \quad \forall i \in \mathbb{I}
    \end{aligned}
    \end{equation}
for some $\beta > 1$ and $\gamma \in (0, 1)$.
\end{proposition}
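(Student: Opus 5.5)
The plan is to unwind the composition of the update, eq.~\eqref{eq:update_rule}, with $\underline{\rho}=0$, $\overline{\rho}=+\infty$ and $\epsilon=0$. In that case the projection $\Pi_{[\underline{\rho},\overline{\rho}]}$ is the identity on $(0,\infty)$, so
\[
\rho_i^{(k+1)} = \bigl(\rho_i^{(k)}\bigr)^{1-\alpha}\bigl(\tilde{\rho}_i^{(k+1)}\bigr)^{\alpha},
\qquad\text{equivalently}\qquad
\frac{\rho_i^{(k+1)}}{\rho_i^{(k)}} = \Bigl(\frac{\tilde{\rho}_i^{(k+1)}}{\rho_i^{(k)}}\Bigr)^{\alpha}.
\]
It therefore suffices to bound the ratio $\tilde{\rho}_i^{(k+1)}/\rho_i^{(k)}$ from below by $\mu_2$ on $\mathbb{A}$ and from above by $\mu_1$ on $\mathbb{I}$. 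This yields the claim with $\beta=\mu_2^{\alpha}>1$ and $\gamma=\mu_1^{\alpha}\in(0,1)$, using $\alpha\in(0,1]$. Before this, I will note that $\hat\rho_i^{(k+1)}>0$ is well defined: thanks to $\eta>0$, $\phi_i^{(k+1)}$ is a ratio of strictly positive numbers. Positivity of $\rho^{(k)}$ is then preserved inductively, so all the logarithms and ratios above make sense.

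The key step is to identify which branch of eq.~\eqref{eq:rho_tilde} is taken for each coordinate once $k\ge\bar K$. With $\epsilon=0$, the test $|y_i^{(k+1)}|\le\epsilon$ becomes exactly $y_i^{(k+1)}=0$. By Assumption~\ref{assumption:1}, for $k\ge \bar K$ we have $\mathrm{sign}(y_i^{(k+1)})=\mathrm{sign}(y_i^\star)$. If $i\in\mathbb{A}$ then $y_i^\star\neq0$, so $y_i^{(k+1)}\neq 0$ and the "otherwise" branch applies, giving $\tilde\rho_i^{(k+1)}=\max\{\hat\rho_i^{(k+1)},\mu_2\rho_i^{(k)}\}\ge\mu_2\rho_i^{(k)}$. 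If $i\in\mathbb{I}$ then $y_i^{(k+1)}=0$, so the first branch applies, giving $\tilde\rho_i^{(k+1)}=\min\{\hat\rho_i^{(k+1)},\mu_1\rho_i^{(k)}\}\le\mu_1\rho_i^{(k)}$. Substituting these bounds into the displayed identity gives eq.~\eqref{eq:assumption_2} directly.

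The main subtlety, rather than a real obstacle, is being careful about the index range and the role of the safeguards. The bounds hold for all $k\ge\bar K$, since the branch test uses $y^{(k+1)}$ and the assumption covers every index from $\bar K$ onward. The argument also never uses the value of $\hat\rho$ itself; the $\max$/$\min$ correction alone does the work. I would remark that this is precisely why the clipping bounds must be removed: with finite $\overline{\rho}$ or positive $\underline{\rho}$, the geometric growth or decay would saturate. I would likewise remark that $\epsilon=0$ is needed so that the branch selection coincides exactly with the sign pattern in Assumption~\ref{assumption:1}.
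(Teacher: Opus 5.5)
Your proposal is correct and matches the paper's proof. With clipping removed and $\epsilon=0$, Assumption~\ref{assumption:1} fixes which branch of \eqref{eq:rho_tilde} each coordinate takes, and the log-domain under-relaxation then gives $\beta=\mu_2^{\alpha}$ and $\gamma=\mu_1^{\alpha}$. Your remarks on the positivity of $\hat{\rho}$ (via $\eta>0$) and on the index range $k\ge\bar K$ only make explicit what the paper leaves implicit.
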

\begin{proof}
Let $i \in \mathbb{A}$, and suppose that constraint $i$ is active at the upper bound. Then Assumption~\ref{assumption:1} implies that $y^{(k)}_i > 0$ $\forall k \geq \bar{K}$. Thus combining Equations~\eqref{eq:rho_tilde},~\eqref{eq:update_rule} and our assumptions on $\underline{\rho}$, $\overline{\rho}$, $\epsilon$:
\begin{equation}
    \rho_i^{(k+1)} \geq \beta \,\rho_i^{(k)}
\end{equation}
where $\beta = \mu_2^\alpha > 1$. Similar reasoning applies to the case where $i$ is active at the lower bound.

Now let $i \in \mathbb{I}$. Then Assumption~\ref{assumption:1} implies that $y^{(k)}_i = 0$ $\forall k \geq \bar{K}$, and therefore:
\begin{equation}
    \rho_i^{(k+1)} \leq \gamma \, \rho_i^{(k)}
\end{equation}
where $\gamma = \mu_1^\alpha \in (0,1)$.
\end{proof}

To assume $\epsilon = 0$ is not a fundamental restriction, since $\epsilon$ is only introduced to handle finite-precision arithmetic. At the end of this section we address the assumptions on $\underline{\rho}$ and $\overline{\rho}$, and how they relate to our practical  implementation. 

The behavior of $\rho^{(k)}$ when the active-set has been identified is the key mechanism that drives superlinear convergence, as we show in the following theorem, our main result.
\begin{theorem}
\label{thm:superlinear_convergence}
Let $(w^{\star}, z^{\star}, y^{\star})$ be a primal-dual solution of Problem~\eqref{eq:admm_mpc} with $G = I$, and denote by $\mathbb{A}$ and $\mathbb{I}$ the set of active and inactive constraints at the triple. Assume that after iteration $\bar{K}$, the sequence $\{(w^{(k)}, z^{(k)}, y^{(k)})\}$ generated by~\eqref{eq:admm_iters} satisfies Assumption~\ref{assumption:1}, and that $\{\rho^{(k)}\}$ is driven by~\eqref{eq:update_rule} with $\underline{\rho} = 0$, $\overline{\rho} = + \infty$ and $\epsilon=0$. Moreover suppose that there exists a closed ball $\mathcal{N}$ centered at $w^{\star}$ such that $f$ is $\mathcal{C}^2$ in $\mathcal{N}$, $\nabla^2 f(w) \succ 0 \; \forall w \in \mathcal{N}$ and $\{w^{(k)}\}_{k\geq\bar{K}} \in \mathcal{N}$. 
Then there exists $C > 0$ such that:
\begin{equation}
    \|w^{(k+1)} - w^{\star} \| \leq C e^{-\Theta(k^2)} \|w^{(0)} - w^{\star}\| \;.
\end{equation}
\end{theorem}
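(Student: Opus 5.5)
The plan is to reduce the post-identification ADMM recursion to a linear time-varying error recursion in $e^{(k)} := w^{(k)} - w^{\star}$. Its one-step contraction factor is of order $\max\{\max_{i\in\mathbb{I}}\rho_i^{(k)},\, 1/\min_{i\in\mathbb{A}}\rho_i^{(k)}\}$. By Proposition~\ref{prop:rho_dynamics} this factor decays geometrically, so the product of factors decays like $\theta^{\sum k} = e^{-\Theta(k^2)}$.

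\textbf{Step 1 (error recursion).} With $G=I$, the optimality condition of the $w$-update together with the $y$-update gives
\begin{equation*}
\nabla f(w^{(k+1)}) + y^{(k+1)} + \rho^{(k)}(z^{(k+1)}-z^{(k)}) = 0 ,
\end{equation*}
while at the solution $\nabla f(w^{\star}) + y^{\star}=0$. By Proposition~\ref{prop:assumption_1}, for $k>\bar K$ we have $z_{\mathbb{A}}^{(k+1)}=z_{\mathbb{A}}^{(k)}=z^{\star}_{\mathbb{A}}$ and $z_{\mathbb{I}}^{(k)}=w_{\mathbb{I}}^{(k)}$. By Assumption~\ref{assumption:1}, $y_{\mathbb{I}}^{(k)}=y^{\star}_{\mathbb{I}}=0$. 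Since $\mathcal{N}$ is a convex ball containing $w^{\star}$ and $w^{(k)}$, write $\nabla f(w^{(k)})-\nabla f(w^{\star}) = H_k e^{(k)}$ with $H_k=\int_0^1\nabla^2 f(w^\star+t e^{(k)})\,dt$. By continuity of $\nabla^2 f$ and compactness of $\mathcal{N}$, $mI\preceq H_k\preceq LI$ for some uniform constants $0<m\le L$. The inactive rows then read $(H_{k+1}e^{(k+1)})_{\mathbb{I}} + \rho^{(k)}_{\mathbb{I}}(e^{(k+1)}_{\mathbb{I}}-e^{(k)}_{\mathbb{I}})=0$. The active rows give $(H_{k+1}e^{(k+1)})_{\mathbb{A}} = -(y^{(k+1)}-y^{\star})_{\mathbb{A}}$ for all $k>\bar K$. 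Using $y_{\mathbb{A}}^{(k+1)} = y_{\mathbb{A}}^{(k)} + \rho_{\mathbb{A}}^{(k)}e_{\mathbb{A}}^{(k+1)}$ (because $z_{\mathbb{A}}^{(k+1)}=z^\star_{\mathbb{A}}$), we obtain, with $R_k=\mathrm{diag}(\rho^{(k)})$ and $P_{\mathbb{A}},P_{\mathbb{I}}$ the coordinate projectors,
\begin{equation*}
(H_{k+1}+R_k)\,e^{(k+1)} = \big(P_{\mathbb{I}}R_k + P_{\mathbb{A}}H_k\big)\,e^{(k)} .
\end{equation*}

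\textbf{Step 2 (bounding the one-step factor; main obstacle).} I need
\begin{equation*}
\|(H_{k+1}+R_k)^{-1}(P_{\mathbb{I}}R_k+P_{\mathbb{A}}H_k)\| \le c\,\Big(\max_{i\in\mathbb{I}}\rho^{(k)}_i + 1/\min_{i\in\mathbb{A}}\rho^{(k)}_i\Big)
\end{equation*}
for a constant $c$ depending only on $m,L$. The $\mathbb{I}$-part is easy: $\|(H+R)^{-1}\|\le 1/m$, so that part is at most $\max_{\mathbb{I}}\rho_i/m$. The delicate part is $(H+R)^{-1}$ applied to vectors supported on $\mathbb{A}$, where I must exploit that $\rho_{\mathbb{A}}$ is large. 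I will use the block inverse formula. The $\mathbb{AA}$ block of $(H+R)^{-1}$ is the inverse of the Schur complement $H_{\mathbb{AA}}+R_{\mathbb{A}}-H_{\mathbb{AI}}(H_{\mathbb{II}}+R_{\mathbb{I}})^{-1}H_{\mathbb{IA}}$. This matrix is $\succeq R_{\mathbb{A}}+mI$, since a Schur complement of the positive definite matrix $H+\mathrm{diag}(0,R_{\mathbb I})$ dominates its smallest eigenvalue and $R_{\mathbb{A}}$ adds on top. Hence that block has norm at most $1/\min_{\mathbb{A}}\rho_i$. The $\mathbb{IA}$ block equals $-(H_{\mathbb{II}}+R_{\mathbb{I}})^{-1}H_{\mathbb{IA}}$ times the $\mathbb{AA}$ block, so its norm is at most $(L/m)/\min_{\mathbb{A}}\rho_i$. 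Multiplying by $\|P_{\mathbb{A}}H_k\|\le L$ gives the claim.

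\textbf{Step 3 (summing the exponents).} By Proposition~\ref{prop:rho_dynamics}, for $k\ge\bar K$ we have $\max_{\mathbb{I}}\rho^{(k)}_i\le \gamma^{k-\bar K}\max_{\mathbb{I}}\rho^{(\bar K)}_i$ and $1/\min_{\mathbb{A}}\rho^{(k)}_i\le \beta^{-(k-\bar K)}/\min_{\mathbb{A}}\rho^{(\bar K)}_i$. Therefore the one-step factor is at most $c'\theta^{k}$ with $\theta=\max\{\gamma,\beta^{-1}\}<1$. Iterating gives $\|e^{(k+1)}\|\le \prod_{j=\bar K+1}^{k} c'\theta^{j}\,\|e^{(\bar K+1)}\| = e^{-\Theta(k^2)}\|e^{(\bar K+1)}\|$, since $\sum_j j\log\theta$ is quadratic in $k$ and dominates the linear term $k\log c'$. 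Finally, the finitely many iterations up to $\bar K+1$ are absorbed into the constant. Global ADMM boundedness of the iterates, or simply $\{w^{(k)}\}_{k\ge\bar K}\subset\mathcal{N}$, gives $\|e^{(\bar K+1)}\|\le C\|w^{(0)}-w^\star\|$ for some $C$, possibly depending on the trajectory, which the statement allows.
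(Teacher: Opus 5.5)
Your proposal is correct and follows essentially the same route as the paper. You derive the same error recursion $(H^{(k+1)}+\rho^{(k)})e^{(k+1)} = \bigl[(H^{(k)}e^{(k)})_{\mathbb{A}};\ \rho^{(k)}_{\mathbb{I}}e^{(k)}_{\mathbb{I}}\bigr]$, bound the active block through the Schur complement $S \succeq R_{\mathbb{A}}+mI$ (growing like $\beta^k$), bound the inactive block through $\|(H+R)^{-1}\|\le 1/m$ with $\rho_{\mathbb{I}}$ decaying like $\gamma^k$, and unroll to $c^{k}\bar\gamma^{k(k+1)/2}$; the only differences are that you bound the column blocks of $(H+R)^{-1}$ directly rather than substituting to get $S^{(k+1)}e_{\mathbb{A}}^{(k+1)}=B^{(k+1)}e^{(k)}$, and that your $S\succeq R_{\mathbb{A}}+mI$ is a slightly cleaner form of the paper's comparison with $H_{\mathbb{AA}}-H_{\mathbb{AI}}H_{\mathbb{II}}^{-1}H_{\mathbb{IA}}$.
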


\begin{proof}
Because we are after an asymptotic convergence rate, in the following we may assume without loss of generality $\bar{K} = 0$.

Notice that given our assumptions on $\mathcal{C}^2$ continuity of $f$ within $\mathcal{N}$, the $w$-update is characterized by the optimality condition:
\begin{equation}
\label{eq:w_optimality}
\nabla f(w^{(k+1)}) + y^{(k)} + \rho^{(k)} (w^{(k+1)} - z^{(k)}) = 0
\end{equation}
and slicing the equation above with respect to $\mathbb{A}$ and $\mathbb{I}$ we get:
\begin{equation}
\label{eq:w_opt_manipulation_1}
\nabla f(w^{(k+1)}) + \begin{bmatrix}
    y_{\mathbb{A}}^{(k)} \\
    0
\end{bmatrix} + \begin{bmatrix}
    \rho_{\mathbb{A}}^{(k)} (w_{\mathbb{A}}^{(k+1)} - z_{\mathbb{A}}^{\star}) \\
    \rho_{\mathbb{I}}^{(k)} (w_{\mathbb{I}}^{(k+1)} - w_{\mathbb{I}}^{(k)})
\end{bmatrix} = 0
\end{equation}
where we used $z^{(k)}_\mathbb{A} = z_\mathbb{A}^{\star}$. $w_\mathbb{I}^{(k)} = z_\mathbb{I}^{(k)}$ and $y_{\mathbb{I}}^{(k)} = 0$, as stated by Proposition~\ref{prop:assumption_1} and Assumption~\ref{assumption:1}. However, by definition of the dual update in~\eqref{eq:admm_iters} and because $z_\mathbb{A}^{(k)} = z_\mathbb{A}^{\star}$ $\forall k$,~\eqref{eq:w_optimality} also implies:
\begin{equation}
\label{eq:w_opt_manipulation_2}
\begin{aligned}
\nabla_{w_{\mathbb{A}}} f(w^{(k+1)}) + y_{\mathbb{A}}^{(k)} + \rho_{\mathbb{A}}^{(k)} (w_{\mathbb{A}}^{(k+1)} - z_{\mathbb{A}}^{(k+1)}) &= 0 \iff \\
\nabla_{w_{\mathbb{A}}} f(w^{(k+1)}) + y_{\mathbb{A}}^{(k+1)} &= 0
\end{aligned}
\end{equation}
where we denote by $\nabla_{w_{\mathbb{A}}}f$ the gradient of $f$ w.r.t. $w_{\mathbb{A}}$. Combining~\eqref{eq:w_opt_manipulation_1} and~\eqref{eq:w_opt_manipulation_2}, observing that $w_{\mathbb{A}}^{\star} = z_{\mathbb{A}}^{\star}$ and defining $e^{(k)} = w^{(k)} - w^{\star}$ we get:
\begin{equation}
\label{eq:gradients_errors}
\nabla f(w^{(k+1)}) + \rho^{(k)} e^{(k+1)} = \begin{bmatrix}
    \nabla_{w_{\mathbb{A}}}f(w^{(k)}) \\
    \rho_{\mathbb{I}}^{(k)} e^{(k)}_{\mathbb{I}}
\end{bmatrix}
\end{equation}
where we added $0 = -w_\mathbb{I}^{*} + w_\mathbb{I}^{*}$ to the $w_\mathbb{I}^{(k+1)} - w_\mathbb{I}^{(k)}$ term in~\eqref{eq:w_opt_manipulation_1}.
Now we rewrite the gradients in \eqref{eq:gradients_errors} in terms of the error sequence. The mean value theorem states:
\begin{equation}
\begin{aligned}
    \nabla f(w^{(k)}) &= \nabla f(w^{\star}) + \\
    &\left( \int_0^1 \nabla^2 f(\tau w^{\star} + (1 - \tau) w^{(k)}) d\tau \right) e^{(k)} \\
    &= \nabla f(w^{\star}) +  H^{(k)} e^{(k)}
\end{aligned}
\end{equation}
where given our assumptions on $\nabla^2f$ we have that there exist $0 < m < M$ such that, for all $k$:
\begin{equation}
\label{eq:H_bounds}
mI \preceq H^{(k)} \preceq M I
\end{equation}
At this point, we observe that $\nabla_{w_\mathbb{I}}f(w^{\star}) = 0$, which follows immediately from the stationarity condition $\nabla f(w^{\star}) + y^{\star} = 0$, combined with $y_\mathbb{I}^{\star} = 0$. Therefore, defining:
\begin{equation}
    H^{(k)} = \begin{bmatrix}
        H_{\mathbb{A}, \mathbb{A}}^{(k)} & H_{\mathbb{A},\mathbb{I}}^{(k)} \\
        H_{\mathbb{I}, \mathbb{A}}^{(k)} & H_{\mathbb{I}, \mathbb{I}}^{(k)}
    \end{bmatrix}
\end{equation}
we can rewrite Equation~\eqref{eq:gradients_errors} as follows:
\begin{equation}
\begin{aligned}
    \nabla f(w^{{\star}}) + (H^{(k+1)} + \rho^{(k)}) e^{(k+1)} = \\
    \begin{bmatrix} \nabla_{w_{\mathbb{A}}}f(w^{\star}) + 
    H^{(k)}_{\mathbb{A},\mathbb{A}}\, e_{\mathbb{A}}^{(k)} + H_{\mathbb{A}, \mathbb{I}}^{(k)} \,e_{\mathbb{I}}^{(k)} \\
    \rho_{\mathbb{I}}^{(k)} e_{\mathbb{I}}^{(k)}
\end{bmatrix}
\end{aligned}
\end{equation}
where the gradients cancel out, leading to:
\begin{equation}
\label{eq:error_recursion}
(H^{(k+1)} + \rho^{(k)}) e^{(k+1)} = \begin{bmatrix}
    H^{(k)}_{\mathbb{A},\mathbb{A}}\, e_{\mathbb{A}}^{(k)} + H_{\mathbb{A}, \mathbb{I}}^{(k)} \,e_{\mathbb{I}}^{(k)} \\
    \rho_{\mathbb{I}}^{(k)} e_{\mathbb{I}}^{(k)}
\end{bmatrix}
\end{equation}
which is a recursion in the error $e^{(k)}$. We will now manipulate Equation~\eqref{eq:error_recursion} in order to obtain a rate of convergence for $\|e^{(k)}\|$. We start by solving for $e_{\mathbb{I}}^{(k+1)}$:
\begin{equation}
\label{eq:solve_I}
\begin{aligned}
    e_{\mathbb{I}}^{(k+1)} = \left(P^{(k+1)}\right)^{-1}\left( -H_{\mathbb{I}, \mathbb{A}}^{(k+1)} e_{\mathbb{A}}^{(k+1)} + \rho_{\mathbb{I}}^{(k)} e_{\mathbb{I}}^{(k)} \right)
\end{aligned}
\end{equation}
where $P^{(k+1)} = H_{\mathbb{I}, \mathbb{I}}^{(k+1)} + \rho_{\mathbb{I}}^{(k)}$. We then substitute in the system for $e_{\mathbb{A}}^{(k+1)}$:
\begin{equation}
\begin{aligned}
\label{eq:solve_A}
    S^{(k+1)} e_{\mathbb{A}}^{k+1} = B^{(k+1)} e^{(k)}
\end{aligned}
\end{equation}
where:
\begin{equation}
    \begin{aligned}
        S^{(k+1)} &= H_{\mathbb{A},\mathbb{A}}^{(k+1)} + \rho_{\mathbb{A}}^{(k)} - H^{(k+1)}_{\mathbb{A}, \mathbb{I}} (P^{(k+1)})^{-1} H_{\mathbb{I},\mathbb{A}}^{(k+1)} \\
        B^{(k+1)} &= \begin{bmatrix} H_{\mathbb{A},\mathbb{A}}^{(k)} & H_{\mathbb{A}, \mathbb{I}}^{(k)} - H_{\mathbb{A},\mathbb{I}}^{(k+1)}(P^{(k+1)})^{-1} \rho_{\mathbb{I}}^{(k)} \end{bmatrix}
    \end{aligned}
\end{equation}
We now make two important observations. First, notice that because Proposition~\ref{prop:rho_dynamics} states that $\rho_{\mathbb{I}}^{(k)}$ is exponentially decreasing and $\|H^{(k)}\| \leq M$, $B^{(k+1)}$ is bounded in norm, uniformly over $k$.
\\
The second observation is that for some $c_0 > 0$, we have:
\begin{equation}
\label{eq:lambda_min_bound}
    \lambda_{\min}\left(S^{(k+1)}\right) > c_0 \beta^{k} \,.
\end{equation}
This is an consequence of the fact that, since $P^{(k+1)} \succ H_{\mathbb{I},\mathbb{I}}^{(k+1)}$:
\begin{equation}
\begin{aligned}
    H_{\mathbb{A},\mathbb{A}}^{(k+1)} - H^{(k+1)}_{\mathbb{A}, \mathbb{I}} (P^{(k+1)})^{-1} H_{\mathbb{I},\mathbb{A}}^{(k+1)} &\succ \\
    H_{\mathbb{A},\mathbb{A}}^{(k+1)} - H^{(k+1)}_{\mathbb{A}, \mathbb{I}} (H_{\mathbb{I}, \mathbb{I}}^{(k+1)})^{-1} H_{\mathbb{I},\mathbb{A}}^{(k+1)} &\succ 0
\end{aligned}
\end{equation}
and that according to Proposition~\ref{prop:rho_dynamics}:
\begin{equation}
    \lambda_{\min}\left(\rho_\mathbb{A}^{(k)}\right) \ge \beta^k \lambda_{\min}\left(\rho^{(0)}\right) \,.
\end{equation}
Equation~\eqref{eq:lambda_min_bound} then implies:
\begin{equation}
    \|\left(S^{(k+1)}\right)^{-1}\| = \frac{1}{\lambda_{\min}\left(S^{(k+1)}\right)} < \frac{1}{c_0} \left(\frac{1}{\beta}\right)^{k}
\end{equation}

Using these two observations together with Equation~\eqref{eq:solve_A}, we deduce that there exists $c_1 > 0$ such that:
\begin{equation}
\label{eq:e_A_norm_bound}
\|e_{\mathbb{A}}^{(k+1)}\| \leq c_1 \left(\frac{1}{\beta}\right)^{k} \|e^{(k)}\|
\end{equation}
Then from Equation~\eqref{eq:solve_I} we get:
\begin{equation}
\begin{aligned}
        \|e_{\mathbb{I}}^{(k+1)}\| \leq \frac{1}{m} \left( \|H_{\mathbb{I},\mathbb{A}}^{(k+1)}\| \|e_{\mathbb{A}}^{(k+1)}\|
        +  \|\rho_{\mathbb{I}}^{(k)}\|\|e_{\mathbb{I}}^{(k)}\|\right)
\end{aligned}
\end{equation}
where we used the fact that:
\begin{equation}
\begin{aligned}
    \|\left( P^{(k+1)} \right)^{-1}\| &= \frac{1}{\lambda_{\min}\left(H_{\mathbb{I}, \mathbb{I}}^{(k+1)} + \rho_{\mathbb{I}}^{(k)}\right)}\\
    &\leq \frac{1}{\lambda_{\min}\left( H_{\mathbb{I}, \mathbb{I}}^{(k+1)} \right)} \\
    &\leq \frac{1}{m} \,.
\end{aligned}
\end{equation}
Therefore, using $\|H^{(k+1)}_{\mathbb{I}, \mathbb{A}} \| \leq \|H^{(k+1)}\| \leq M$, $\|\rho_{\mathbb{I}}^{(k)}\| \leq \gamma^k \|\rho^0\|$ and the bound for $\|e_{\mathbb{A}}^{(k+1)}\|$ from Equation~\eqref{eq:e_A_norm_bound}, we get:
\begin{equation}
\label{eq:e_I_norm_bound}
\begin{aligned}
    \|e_{\mathbb{I}}^{(k+1)}\| \leq \frac{M}{m} c_1 \left(\frac{1}{\beta}\right)^k \|e^{(k)}\| + \frac{\|\rho^0\|}{m} \gamma^k \|e_{\mathbb{I}}^{(k)}\|
\end{aligned}
\end{equation}
Defining $\bar{\gamma} = \max\{\gamma, \frac{1}{\beta}\} \in (0,1)$, Inequalities~\eqref{eq:e_A_norm_bound} and ~\eqref{eq:e_I_norm_bound} together imply that there exists $c_2>0$ such that:
\begin{equation}
    \label{eq:error_norm_bound}
    \|e^{(k+1)}\| \leq c_2 \bar{\gamma}^k \|e^{(k)}\|
\end{equation}
Unrolling~\eqref{eq:error_norm_bound}, we get:
\begin{equation}
    \begin{aligned}
        \|e^{(k+1)}\| &\leq c_2 \bar{\gamma}^k \|e^{(k)}\| \\
        &\leq c_2^2 \bar{\gamma}^{k} \bar{\gamma}^{k-1} \|e^{(k-1)}\| \\
        &\vdots \\
        &\leq c_2^{k+1} \bar{\gamma}^{\left(\sum_{s=1}^k s\right)} \|e^{(0)}\| \\
        &= c_2^{k+1} \bar{\gamma}^{\frac{(k+1)k}{2}} \|e^{(0)}\| \\
        &\leq\exp \left(\frac{\log \bar{\gamma}}{2}(k+1) k + |\log c_2|(k+1)\right) \|e^{(0)}\| \\
        &= \exp \left( -\Theta\left(k^2\right)\right) \|e^{(0)}\|
    \end{aligned}
\end{equation}
where we used the fact that $\bar{\gamma} \in (0, 1)$. 
The $e^{-\Theta(k^2)}$ bound on $\|w^{(k)} - w^\star\|$ then extends directly to the KKT residuals $\|r_p^{(k)}\|_{\infty}$ and $\|r_d^{(k)}\|_{\infty}$.
\end{proof}

Before we proceed, a few remarks are in order. First, in Theorem~\ref{thm:superlinear_convergence} we considered box-constrained problems where $G=I$. However we have no reason to believe that a superlinear rate could not be established for general $G$. In fact, in our set of benchmark problems only Aircraft is box-constrained, and fast convergence was observed in all cases. 

Another assumption that needs to be addressed is $\mathcal{C}^2$ continuity of $f$ and positive-definiteness of $\nabla^2f$ about $w^{\star}$. While this clearly holds for condensed problems~\eqref{eq:dense_qp}, it does not for the sparse formulation~\eqref{eq:sparse_qp}, due to the presence of the non-smooth term $\mathcal{I}_{\mathcal{X}}$. However, one can prove that when running ADMM on the two problems with a compatible initialization, the $y^{(k)}$ iterates are exactly the same, while $z^{(k)}$ iterates coincide up to a constant 
arising from the linear-in-$x_0$ bound offset in the condensed formulation. Since $\rho^{(k)}$ is determined entirely 
by $y^{(k)}$, $y^{(k)} - y^{(k-1)}$ and $z^{(k)} - z^{(k-1)}$, it too coincides 
across the two formulations. As the primal and dual residuals depend on the same 
quantities, the superlinear convergence rate established above applies equally to 
Problem~\eqref{eq:sparse_qp}.

Our final remark is concerned with the fact that in Theorem~\ref{thm:superlinear_convergence} we assumed $\underline{\rho} = 0$ and $\overline{\rho} = + \infty$, essentially removing the effect of clipping from~\eqref{eq:update_rule}. However in a practical implementation clipping is actually necessary, as it prevents numerical ill-conditioning. Therefore, in order to preserve the benefits of clipping while enforcing the $\rho^{(k)}$-dynamics derived in Proposition~\ref{prop:rho_dynamics}, one could introduce a time-varying law for $\underline{\rho}$ and $\overline{\rho}$, allowing them to evolve as:
\begin{equation}
    \begin{aligned}
        \underline{\rho}^{(k+1)} &= \lambda_1 \underline{\rho} ^{(k)} \\
        \overline{\rho}^{(k+1)} &= \lambda_2 \overline{\rho} ^{(k)} 
    \end{aligned}
\end{equation}
where $\lambda_1 \in (0,1)$ and $\lambda_2 > 1$ should be chosen close enough to 1 to preserve numerical stability. However, we found this modification to have no practical impact on the convergence speed, and for simplicity kept $\underline{\rho}$ and $\overline{\rho}$ fixed.

\section{High Performance Implementation} \label{sec:implementation}

\begin{figure}[t] 
    \centering 
    \includegraphics[width=\linewidth]{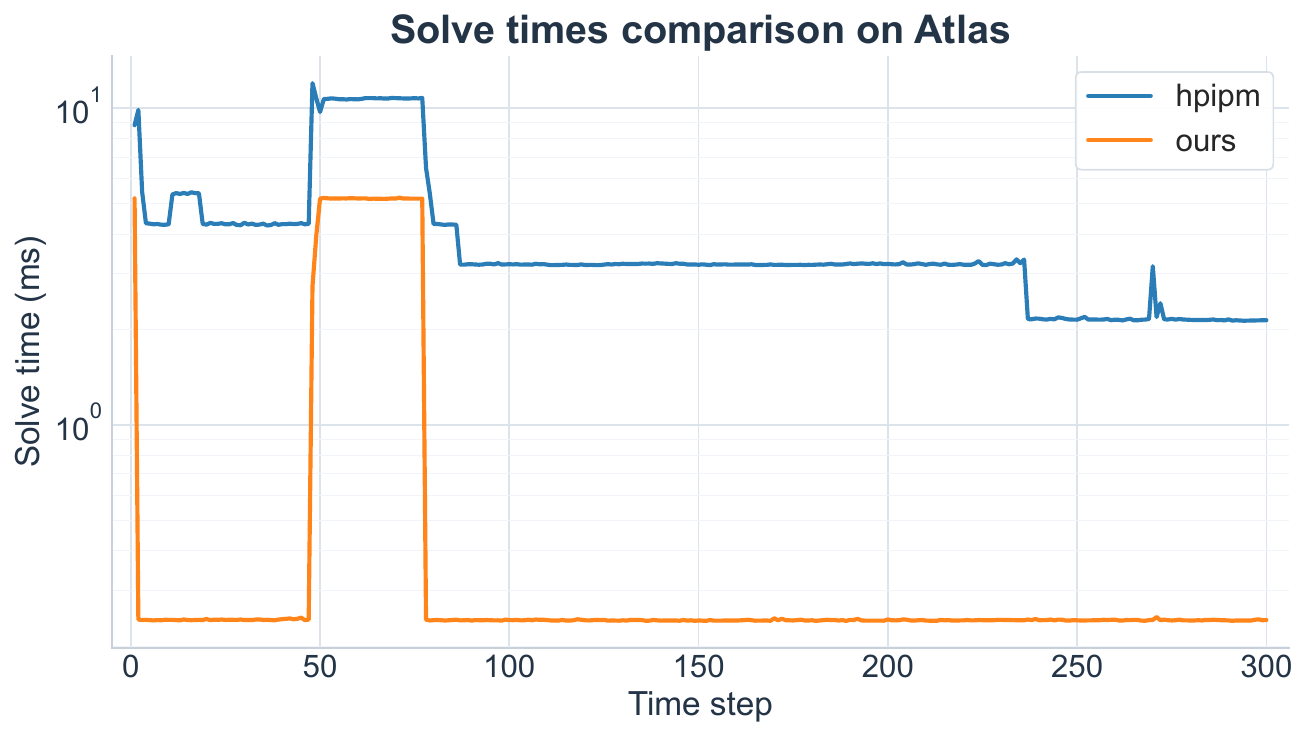} 
    \caption{Solve times comparison between the proposed solver and HPIPM on the Atlas problem.}
\label{fig:atlas_solve_times}
\end{figure}

In this section, we evaluate two high-performance implementations of the proposed ADMM-based MPC solver, corresponding to the sparse formulation~\eqref{eq:sparse_qp} and the condensed formulation~\eqref{eq:dense_qp}. Our goal is to assess both latency and warm-start performance, and to compare each implementation against state-of-the-art solvers designed for the same problem class. Both implementations rely on dense linear algebra kernels specialized for MPC, in the same spirit as BLASFEO~\cite{blasfeo}, while the sparse solver further exploits the OCP structure of the ADMM \(w\)-update.

To understand the runtime behavior of the proposed method, it is useful to recall the cost of updating the penalty parameter \(\rho^{(k)}\). For condensed MPC problems~\eqref{eq:dense_qp}, the ADMM \(w\)-update in~\eqref{eq:admm_iters} can be written as
\begin{equation}
\begin{aligned}
w^{(k+1)} &= - \left(H + G^{\top} \rho^{(k)} G \right)^{-1} \xi^{(k)} \\
\xi^{(k)} &= g + G^{\top} \left(y^{(k)} - \rho^{(k)} z^{(k)}\right) \,.
\end{aligned}
\end{equation}
The main cost of changing \(\rho^{(k)}\) is therefore not the evaluation of the update rule~\eqref{eq:update_rule} itself, but the need to refactorize the matrix \(H + G^{\top} \rho^{(k)} G\). In MPC, however, warm starts are often highly effective, and the solver frequently converges before the first update of \(\rho^{(k)}\) is triggered. In those cases, no factorization is required at solve time, which substantially reduces latency. The same consideration applies, mutatis mutandis, to the sparse formulation~\eqref{eq:sparse_qp}.

\begin{table}
\centering
\begin{tabular}{llcc}
\toprule
Problem & Solve time & Ours & HPIPM \\
\midrule
\multirow{2}{*}{Atlas} 
    & Avg (ms) & \textbf{0.74} & 4.02 \\
    & Max (ms) & \textbf{5.22} & 12.0 \\
\midrule
\multirow{2}{*}{Quadruped} 
    & Avg (ms) & \textbf{0.27} & 8.42 \\
    & Max (ms) & \textbf{4.86} & 45.68 \\
\bottomrule
\end{tabular}
\caption{Solve times comparison between the proposed sparse solver and HPIPM on Atlas and Quadruped.}
\label{tab:riccati_based_solve_times}
\end{table}

In all comparisons we set the tolerance for termination conditions to $\varepsilon = 10^{-6}$ for every solver. We compare the first version against HPIPM, which also exploits OCP structure, on the Atlas and Quadruped benchmarks. These problems are a good match for the sparse formulation, because they involve system models with a large number of control variables. Average and worst case solve times are reported in Table~\ref{tab:riccati_based_solve_times}, and we compare solve times over the whole Atlas simulation in Figure~\ref{fig:atlas_solve_times}. In both cases our solver outperforms HPIPM, in particular in terms of average solve times, highlighting the better warm-starting capabilities of our approach compared to an Interior Point method.
\begin{table}
\centering
\begin{tabular}{llcccc}
\toprule
Problem & Solve time & Ours & OSQP & qpOASES & QPALM \\
\midrule
\multirow{2}{*}{Quadrotor} 
    & Avg (ms) & \textbf{0.19} & 2.43 & \textbf{0.19} & 1.85 \\
    & Max (ms) & \textbf{2.24} & 31.78 & 12.99 & 7.42 \\
\midrule
\multirow{2}{*}{C.o.m} 
    & Avg (ms) & 2.91 & 10.0 & \textbf{1.96} & 3.2 \\
    & Max (ms) & 18.13 & 52.89 & 33.26 & \textbf{8.8} \\
\midrule
\multirow{2}{*}{Aircraft} 
    & Avg (ms) & \textbf{0.028} & - & 0.034 & - \\
    & Max (ms) & \textbf{0.087} & - & 0.34  & - \\
\bottomrule
\end{tabular}
\caption{Solve times comparison between the proposed dense solver and other state of the art solvers on Quadrotor, Chain of masses, Aircraft. On Aircraft, QPALM and OSQP fail to solve the problem within the maximum iteration budget.}
\end{table}
The second solver is compared against OSQP, qpOASES and QPALM \cite{Stellato2020, Ferreau2014, qpalm} on the remaining problems. The results again showcase how the proposed algorithm is competitive both in terms of latency and throughput. Relative to QPALM in particular, performance is slightly worse on the C.o.m. problem, and we suspect that this is due to the fact that in this problem the constraint matrix $G$ is very high dimensional and sparse, thus favoring solvers based on sparse linear algebra methods like QPALM.

\section{Conclusion} \label{sec:conclusions}
In this paper we introduced an adaptive multi-parameter ADMM algorithm that achieves significantly faster convergence speed compared to other ADMM variants, and proved its local superlinear convergence. 
We also compared the proposed algorithm against state-of-the-art MPC solvers, demonstrating compelling performance both in terms of latency and throughput, making it well-suited to latency-critical embedded MPC applications.

In future research we plan to extend our convergence result to non-box-constrained 
problems, and to analyze the transient behavior of our algorithm with the aim of 
establishing a linear convergence phase, finite constraint identification, and local 
superlinear convergence once the active set has been identified. We also plan to 
release the high-performance implementation as an open-source package for MPC.

\bibliographystyle{unsrt}

\end{document}